\newcommand{\ub}[1]{\underbrace{#1}}
\newcommand{\ourU}{\mathbb U}%
\newcommand{\mcU}{{\mycal U}}
\newcommand{\bel}[1]{\begin{equation}\label{#1}}
\newcommand{\ee}{\end{equation}}
\newcommand{\hyp}{\mycal S}
\newcommand{\N}{\mathbb N}
\newcommand{\R}{\mathbb R}
\newcommand{\ext}{\mathrm{ext}}
\newcommand{\mcM}{{\mycal M}}
\newcommand{\eq}[1]{(\ref{#1})}
\newcommand{\beaa}{\begin{eqnarray*}}
\newcommand{\eeaa}{\end{eqnarray*}}
\newtheorem{theorem}{Theorem}
\newtheorem{lemma}{Lemma}[section]
\newtheorem{proposition}[lemma]{Proposition}
\newtheorem{corollary}[lemma]{Corollary}
\DeclareFontFamily{OT1}{rsfs}{}
\DeclareFontShape{OT1}{rsfs}{m}{n}{ <-7> rsfs5 <7-10> rsfs7 <10-> rsfs10}{}
\DeclareMathAlphabet{\mycal}{OT1}{rsfs}{m}{n}
\begin{document}
\title{Construction of $N$-body initial data sets in general relativity}
\author{Piotr T. Chru\'{s}ciel} 
\address{University of Vienna, Vienna, Austria}
\email{piotr.chrusciel@univie.ac.at}
\author{Justin Corvino} 
\address{Department of Mathematics, Lafayette College, Easton, PA 18042 USA}
\email{corvinoj@lafayette.edu}
\author{James Isenberg}
\address{Department of Mathematics, University of Oregon, Eugene, 
OR 97403-5203, USA}
\email{isenberg@uoregon.edu}
\subjclass[2000]{53C21, 83C99}

\begin{abstract}
Given a collection of $N$ solutions of the $(3+1)$ vacuum Einstein constraint equations which are asymptotically Euclidean, we show how to construct a new solution of the constraints which is itself asymptotically Euclidean, and which contains specified sub-regions of each of
the $N$ given solutions.  This generalizes earlier work which handled the time-symmetric case, thus providing a construction of large classes of initial data for the many body problem in general relativity.
\end{abstract}

\maketitle

\section{Introduction.}

An important problem in any theory of gravitation is the description of the motion of many-body systems. Various approximation schemes have been proposed to analyze this in general relativity,  but no rigorous treatment has been provided thus far. A first step towards a solution of this question is to provide wide classes of initial data which solve  the general relativistic constraint equations, and which are relevant to the problem at hand.  There exists a rich family of initial data sets modeling isolated gravitational systems, but the non-linear nature of the constraints must be addressed when attempting to incorporate several such systems into a single one.  In recent work~\cite{ChCorIs} we have shown how this can be done by a gluing construction, under the restrictive hypothesis of time-symmetry. The aim of this work is to remove this restriction.

Given a Riemannian metric $g$ and a symmetric $(0,2)$-tensor
$K$ on an oriented three-manifold $M$, the Einstein constraints
map can be written in the form\footnote{Here and throughout
this paper, we use the Einstein summation convention.}
$$
 \Phi(g,K)= \left(
 \begin{array}{c}
 -2(g^{jk} K_{ij;k}-(K^j{}_j)_{;i})\\
 R(g)-K_{ij}K^{ij}+(K^i{}_i)^2
 \end{array}\right).
$$
The condition then for $(g,K)$ to be the first and second
fundamental forms of $M$ embedded in a Ricci-flat space-time is
that the vacuum constraint equations $\Phi(g,K)=(0,0)$ be
satisfied.

We recall that $(g,K)$ on the exterior $E$ of a ball in
$\mathbb{R}^3$ constitutes  an \emph{asymptotically Euclidean end} (to order
$\ell$) provided there are coordinates in which, for
multi-indices $|\alpha|\leq \ell+1$, $|\beta|\leq \ell$,
\begin{equation} \label{af}
|\partial^{\alpha}(g_{ij}-\delta_{ij})(\mathbf  x)|=O(|\mathbf  x|^{-|\alpha|-1})
 , \qquad
 |\partial^{\beta}K_{ij}(\mathbf  x)|=O(|\mathbf  x|^{-|\beta|-2}),
\end{equation}
where $\partial$ denotes the partial derivative operator.  Note
that throughout the rest of this work, we require that $\ell
\geq 2$. We say that $(M,g,K)$ is \emph{asymptotically
Euclidean} if $M$ is the union of a compact set and a finite
number of ends, all of  which are asymptotically Euclidean for
$(g,K)$ in the sense defined above.  One readily verifies that
every  asymptotically Euclidean (AE) end possesses a
well-defined energy-momentum vector $(m,\mathbf  p)$.

The starting point for constructing initial data for an $N$
body system is the choice of the bodies. Each body is
separately designated by the choice of an AE initial data set,
and by the specification of a fixed interior region in each
such AE solution.  To perform this construction, we also
specify a \emph{vacuum} AE end disjoint from the interior
region in each of the $N$ data sets.  We choose a collection of
points which roughly locate the $N$ bodies on a fiducial flat
background.  We then construct a new initial data set on a
manifold obtained by excising a neighborhood of infinity in
each chosen end, and then gluing these into a manifold
$M_{\rm{ext}}$ which is $\mathbb{R}^3$ with $N$ disjoint balls,
centered around the chosen points, removed.  In this way, the
bodies are made to interact, by gluing the various ends into a
fixed end.  We now construct initial data on the resulting
manifold which: i) is identical to the initial data from the
$N$ bodies away from $M_{\text{ext}}$, hence containing $N$
chosen regions isometric to the specified interior regions of
the bodies; ii) solves the vacuum constraints on
$M_{\text{ext}}$; iii) is identical to a space-like slice of a
Kerr space-time sufficiently far from the bodies; and iv) has
the centers of the bodies in a configuration which is a scaled
version of the chosen configuration, where the scale factor can
be chosen arbitrarily above a certain threshold.  We emphasize
that we preserve the original solutions away from a
neighborhood of the gluing region.  If the bodies are vacuum to
begin with, we produce a solution to the vacuum constraints
everywhere, but we can allow the initial data sets to be
non-vacuum away from the chosen AE end.

The construction actually produces a family of solutions depending on a parameter $\epsilon$; roughly speaking, the bodies cannot be arbitrarily close together, but must be separated by a distance above a certain threshold.  For each $\epsilon$, the distance between the bodies is on the order $O(\epsilon^{-1})$; the smaller $\epsilon$ is taken, the further apart the bodies will be, and thus the weaker will be their initial interaction.  We note that the energy-momentum four-vector of $M_{\rm{ext}}$ in the resulting solution tends to the sum of the four-vectors of the bodies as $\epsilon\rightarrow 0^+$.

 We now state our main theorem.  For clarity of exposition, we assume that the initial data sets are $C^{\infty}$-smooth, and refer the interested reader to \cite{ChDelay, CorvinoSchoen2} to formulate the case of a finite degree of regularity. 
 
 \begin{theorem}
 \label{main1} For each $k=1, \ldots, N$, let $(E_k,
g^k,K^k)$ be a three-dimensional AE end which solves the vacuum constraints $\Phi(g^k, K^k)=0$, with time-like energy-momentum four-vector $(m_k, \mathbf  p_k)$.
Let $U_k\subset E_k$ be a pre-compact neighborhood of the boundary $\partial E_k$, and let  $M_{\rm{ext}}= \mathbb{R}^3\setminus \bigcup\limits_{k=1}^N B_k$, where $\overline{B}_1, \ldots, \overline{B}_N$ are pairwise disjoint closed balls in $\mathbb{R}^3$.  There is an $\epsilon_0>0$ so that for $0<\epsilon<\epsilon_0$, there is a solution
$(M_{\rm{ext}},g_{\epsilon},K_\epsilon)$ of the vacuum constraint equations, with one AE end, containing the disjoint union $\bigcup\limits_{k=1}^N (U_k, g^k,K^k)$, so that the distances between distinct $U_k$ are $O(\epsilon^{-1})$.
Near infinity $(M_{\rm{ext}},g_{\epsilon},K_\epsilon)$ is isometric to a space-like slice of a Kerr metric, with the ADM energy-momentum $(m(g_{\epsilon}), \mathbf  p_{\epsilon})$ satisfying $\Big| m(g_{\epsilon})- \sum\limits_{k=1}^N m_k\Big|  < \epsilon$ and $\Big| \mathbf  p_{\epsilon}-\sum\limits_{k=1}^N \mathbf  p_k\Big | <\epsilon.$
\end{theorem}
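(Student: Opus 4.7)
The natural strategy is to adapt the time-symmetric gluing of \cite{ChCorIs} to the full vacuum setting, replacing the Schwarzschild asymptotic model by the Kerr family so that nontrivial linear momentum can be accommodated. Using the Corvino-Schoen asymptotic gluing in the form developed by Chru\'sciel-Delay \cite{ChDelay} for the full vacuum constraints, I would first deform each $(g^k,K^k)$ on a large annular region in $E_k$ away from $U_k$ to obtain a vacuum solution $(\tilde g^k,\tilde K^k)$ that agrees with $(g^k,K^k)$ on $U_k$ and is identical to a space-like slice of a Kerr metric outside some coordinate ball of radius $R_k$. The Kerr parameters $(\tilde m_k,\tilde{\mathbf p}_k,\tilde{\mathbf J}_k)$ can be arranged to approximate the ADM charges of $(g^k,K^k)$ arbitrarily well; time-likeness of $(m_k,\mathbf p_k)$ ensures the target Kerr family is non-degenerate.

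\textbf{Stage 2 (Positioning and scaling).}
Fix centers $c_k$ for the balls $B_k\subset\mathbb R^3$. For $\epsilon>0$ small, truncate each $(\tilde g^k,\tilde K^k)$ at a sufficiently large radius, apply the vacuum scaling $(g,K)\mapsto(\lambda^2 g,\lambda K)$ with $\lambda$ chosen so that the Kerr tail shrinks to an $O(\epsilon)$-size annulus in Euclidean coordinates, and embed the rescaled data into a neighborhood of $c_k\in\mathbb R^3$, matching the asymptotic flat coordinates of $\tilde g^k$ to the Euclidean coordinates on $\mathbb R^3$. In the resulting metric on $M_{\text{ext}}$ the pairwise distances between the embedded images of the $U_k$ are $O(\epsilon^{-1})$, as required.

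\textbf{Stage 3 (Global Kerr gluing).}
On the portion of $M_{\text{ext}}$ exterior to the Kerr tails of all bodies, construct an approximate vacuum solution by interpolating, via cutoff functions in an annular region surrounding all $c_k$, between the individual Kerr pieces and a single global Kerr slice whose parameters are chosen close to $\sum_k(\tilde m_k,\tilde{\mathbf p}_k,\tilde{\mathbf J}_k)$. Correct this approximate solution to an exact vacuum solution by the Corvino-Schoen method: solve $LL^*\xi=$(constraint residual) in weighted Sobolev spaces, exploiting the absence of nontrivial KIDs on a generic Kerr slice to invert $LL^*$ modulo its finite-dimensional kernel. The support of the correction lies in the gluing annulus, so the $U_k$ regions and the Kerr structure near infinity are both preserved.

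\textbf{Main obstacle.}
The principal difficulty is Stage 3: establishing a coercivity estimate for $LL^*$ on $(M_{\text{ext}},g_{\epsilon},K_{\epsilon})$ with constants uniform in $\epsilon$, so that the correction exists and is small enough not to disturb the gluing, and simultaneously tuning the global Kerr parameters so that the final ADM charges satisfy $|m(g_\epsilon)-\sum_k m_k|<\epsilon$ and $|\mathbf p_\epsilon-\sum_k\mathbf p_k|<\epsilon$. Uniform coercivity would be obtained by rescaling, concentrating the body regions into small pieces in a fixed background geometry independent of $\epsilon$, and invoking standard elliptic estimates with weights adapted to this reduction. The final charge matching is a finite-dimensional implicit function argument, using smooth dependence of the ADM charges on the Kerr parameters to absorb the residual errors produced by the correction step.
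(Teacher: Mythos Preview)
Your three-stage architecture matches the paper's proof exactly: first reduce each end to exact Kerr outside a compact set (the paper's Proposition~\ref{kerr}), then rescale, then perform a many-Kerr gluing on a fixed bounded domain (the paper's Theorem~\ref{main3}). So the skeleton is right.

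There is, however, a real confusion in your Stage~3 that obscures the actual mechanism. You write that one should ``exploit the absence of nontrivial KIDs on a generic Kerr slice to invert $LL^*$ modulo its finite-dimensional kernel,'' and then separately treat the charge matching as an implicit-function afterthought. This is backwards on two counts. First, slices of Kerr \emph{do} carry KIDs (from $\partial_t$ and $\partial_\phi$), so the premise is false; more to the point, after the rescaling you yourself invoke, the background on the gluing region is $O(\epsilon)$-close to \emph{Minkowski}, whose KID space is ten-dimensional. That ten-dimensional cokernel $\mathcal K_0$ is a genuine obstruction to solving $\Phi=0$ by $LL^*$ alone; one can only solve \emph{up to} $\zeta\mathcal K_0$. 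Second, the charge matching is not a separate cosmetic step: the ten free parameters $(\delta m,\delta\mathbf p,\delta\mathbf c,\delta\mathbf J)$ in the exterior Kerr model are precisely what one uses to kill the ten cokernel components. The paper computes these projections explicitly as ``balance equations'' via the boundary flux identity~(\ref{C4a}), obtaining expressions of the form $\epsilon\,\delta m + O(\epsilon^2)$, $\epsilon\,\delta p^i + O(\epsilon^2)$, etc., and then closes with a Brouwer fixed-point argument (continuity in $\theta$ suffices; no differentiability is needed). So the cokernel obstruction and the charge accounting are one and the same problem, and the Brouwer argument---not an implicit function theorem---is what resolves it.
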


The construction described here allows us to glue together any finite number of
asymptotically Euclidean ends which solve the vacuum
constraint equations,
and the construction is local near infinity in each end;
i.e., any given compact subset of the end can be realized
isometrically in the final metric $g_{\epsilon}$.  The local nature of the construction implies that we can allow the $N$ original
solutions to have multiple ends, and we can also allow nonzero matter fields supported outside a neighborhood of infinity in the chosen ends, as indicated below.

\begin{corollary}
Let $(M_k, g^k, K^k)$, $k=1, \ldots, N$, be three-dimensional initial data sets with vacuum AE ends $E_k \subset M_k$ of respective time-like ADM energy-momentum $(m_k, \mathbf  p_k)$.  Let $U_k\supset
M_k\setminus E_k$ be chosen subdomains with $E_k\cap U_k$
precompact.  There is an $\epsilon_0>0$ so that for $0<\epsilon<\epsilon_0$, there is
an initial data set $(M,g_{\epsilon}, K_{\epsilon})$ which contains a region $U$ isometric to $\bigcup\limits_{k=1}^N (U_k, g^k)$, for which $(M\setminus U, g_{\epsilon},K_{\epsilon})$ has one AE end, with the same properties as those of $(M_{\rm{ext}},g_{\epsilon},K_\epsilon)$ as in Theorem \ref{main1}.
 \label{maincor}
\end{corollary}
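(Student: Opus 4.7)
The strategy is to apply Theorem \ref{main1} to the vacuum AE ends $(E_k, g^k|_{E_k}, K^k|_{E_k})$ only, and then to reattach the cores $M_k\setminus E_k$ (together with any matter sources they carry) to the output. Concretely, set $V_k := E_k \cap U_k$. By hypothesis $V_k$ is a precompact open neighborhood of $\partial E_k$ in $E_k$, so it plays the role of the input neighborhood $U_k$ in Theorem \ref{main1}. That theorem then provides, for every $0<\epsilon<\epsilon_0$, a vacuum solution $(M_{\rm{ext}},g_\epsilon,K_\epsilon)$ with $M_{\rm{ext}}=\mathbb R^3 \setminus \bigcup_k B_k$, an isometric embedding $\iota_k\colon (V_k,g^k,K^k)\hookrightarrow (M_{\rm{ext}},g_\epsilon,K_\epsilon)$ for each $k$, Kerrian asymptotic structure at infinity, and the required ADM estimates.

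I would then build $M$ by gluing the original initial data manifolds $M_k$ to $M_{\rm{ext}}$ along the neighborhoods $V_k$: identify each $V_k\subset M_k$ with its image $\iota_k(V_k)\subset M_{\rm{ext}}$ and take the resulting smooth quotient. Define initial data on $M$ by declaring it equal to $(g^k,K^k)$ on $M_k$ and to $(g_\epsilon,K_\epsilon)$ on $M_{\rm{ext}}$; these are compatible on the overlap $V_k$ because $\iota_k$ is an isometry of the data. The constraint equations then hold on $M$ since they hold on each $M_k$ (by hypothesis, together with any admissible matter sources, which by assumption are supported in $M_k\setminus E_k$, i.e.\ inside the reattached core) and on $M_{\rm{ext}}$ (by Theorem \ref{main1}). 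The distinguished region $U:=\bigcup_k U_k \subset M$ is isometric to $\bigsqcup_k(U_k,g^k,K^k)$ by construction, while $M\setminus U \subset M_{\rm{ext}}$ inherits the Kerr-near-infinity structure and ADM estimates directly from Theorem \ref{main1}.

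The only nontrivial point, and the step where I would need to be careful, is the locality assertion made in the paragraph immediately following Theorem \ref{main1}: the construction must leave $(g_\epsilon,K_\epsilon)$ equal to $(g^k,K^k)$ on \emph{all} of $V_k$, with modifications confined to $E_k\setminus V_k$. Without this locality, the reattachment of $M_k$ along $V_k$ would not be smooth and could fail to preserve the constraints across $\partial E_k$. I expect this locality to follow from the proof of Theorem \ref{main1} itself, since the cutoff functions realizing the gluing can be chosen with support disjoint from $V_k$; the precompactness of $V_k$ and the freedom to perform the gluing arbitrarily far out along each AE end make this possible. Granted that, the corollary is essentially a topological assembly, with no further analytic content.
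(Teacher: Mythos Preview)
Your proposal is correct and matches the paper's approach: the paper does not give a separate proof of the corollary, but states explicitly just before it that ``the construction is local near infinity in each end'' and that this ``local nature of the construction implies that we can allow the $N$ original solutions to have multiple ends, and we can also allow nonzero matter fields supported outside a neighborhood of infinity in the chosen ends.'' Your argument makes this precise in exactly the intended way, applying Theorem~\ref{main1} to the ends with $V_k=E_k\cap U_k$ in place of $U_k$ and then reattaching the cores along $V_k$; the locality you flag as the one nontrivial point is indeed built into the proof of Theorem~\ref{main1} (Proposition~\ref{kerr} preserves the data on any chosen compact set, and the gluing in Theorem~\ref{main3} is supported away from the inner annuli).
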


\section{Preliminaries}
 \label{sGd}

\subsection{Kerr-Schild Coordinates}
In order to establish certain estimates needed for the gluing
carried out below, it will be convenient to use the explicit
Kerr-Schild form of the Kerr metric (cf., e.g.,
\cite{Chandrasekhar84}):
$$
 g_{\mu\nu} = \eta_{\mu\nu} +\frac{2m{\tilde r}^3}{{\tilde r}^4+a^2 z^2} \theta_\mu \theta_\nu =\eta_{\mu \nu} + O\Big(\frac{m}{|\mathbf  x|}\Big)
 \;,
$$
where $\eta$ is the Minkowski metric (we are using the
signature $(-,+,+,+)$), with
\bel{25V.1}
 \theta_\mu dx^\mu = dx^0 - \frac 1 {{\tilde r}^2+a^2}\left[ {\tilde r} (xdx+ydy) + a (xdy-ydx)\right]-\frac z {\tilde r} dz
 \;,
\ee
and where ${\tilde r}$ is defined implicitly as the solution of
the equation
$$
 {\tilde r}^4 - {\tilde r}^2 ( x^2+y^2+z^2-a^2)-a^2 z^2 =0
 \;.
$$

We will continue to use the term \emph{Kerr-Schild coordinates}
for a coordinate system which has been obtained by a Lorentz
transformation from the above.

\subsection{Global Charges}
In establishing the main results, we use the fact that the Kerr space-times yield a family of initial data sets which admit coordinates (such as Kerr-Schild) with sufficient approximate parity symmetry that allows the definition of angular momentum $\mathbf  J$ and centre of mass $\mathbf  c$, in addition to the four-momentum $(m ,\mathbf  p)$.  Indeed the Kerr initial data sets we use satisfy the Regge-Teitelboim asymptotic conditions, which say that in suitable AE coordinates the following estimates also hold:
\begin{equation} \label{rt}
\Big|\partial^{\alpha}\Big(g_{ij}(\mathbf  x)-g_{ij}(-\mathbf  x)\Big)\Big|=O(|\mathbf  x|^{-|\alpha|-2})
 , \quad
\Big |\partial^{\beta}\Big(K_{ij}(\mathbf  x)+K_{ij}(-\mathbf  x)\Big)\Big|=O(|\mathbf  x|^{-|\beta|-3}).
\end{equation}
The space of initial data satisfying (\ref{rt}) is known to be dense in the space of vacuum AE data \cite{CorvinoSchoen2}.

Using such a coordinate system, we can compute the energy and linear and angular momenta using flux integrals at infinity ($d\sigma_e$ is Euclidean surface measure, $\nu$ is the Euclidean outward normal, and $r=|\mathbf  x|$):
\begin{eqnarray*}
m&=&\frac{1}{16\pi}\lim\limits_{R\rightarrow\infty}\int\limits_{\{r=R\}}
\sum\limits_{i,j}\left( g_{ij,i}-g_{ii,j}\right) \nu^j d\sigma_e \nonumber \\
p_i&=&\frac{1}{8\pi}\lim\limits_{R\rightarrow\infty}\int\limits_{\{r=R\}}
\sum\limits_{j}(K_{ij}-K^{\ell}{}_{\ell}g_{ij})\nu^j d\sigma_e \label{adm} \\
J_i&=&\frac{1}{8\pi}\lim\limits_{R\rightarrow\infty}
\int\limits_{\{r=R\}}\sum\limits_{j,k}
(K_{jk}-K^{\ell}{}_{\ell}g_{jk})Y^j_i\nu^k d\sigma_e \nonumber\\
 mc^{\ell} &=&\frac{1}{16\pi}\lim\limits_{R\rightarrow\infty}\int\limits_{\{r=R\}}
\Big[\sum\limits_{i,j} x^\ell \left( g_{ij,i}-g_{ii,j}\right) \nu^j  -
\sum\limits_{i}\big(g_{ik}\delta^{k\ell} \nu^i- g_{ii}\nu^\ell \big)\Big]  d\sigma_e . \nonumber
 \end{eqnarray*}
Note that in the last term, we can replace $g$ in the center
integrand by $(g-g_{\text{Eucl}})$.    Taken together, these
give a set of ten \emph{Poincar\'{e} charges} associated to the
end.  We emphasize that we do not impose condition (\ref{rt}) on the initial data for the bodies; rather, we show in Proposition \ref{kerr} that we can modify the given vacuum end of each body to a vacuum end in Kerr, preserving the data set away from the end.

We recall the relation between the charge integrals and the
constraints.  Indeed, these charges arise from integrating the
constraints against elements of the cokernel of the linearized
constraint operator.  By linearizing at the Minkowski data, we
have $\Phi(g_{\text{Eucl}}+h,K)= D\Phi(h, K)+ Q(h,K)$, where
$Q(h,K)=h\ast \partial^2 h + \partial h \ast \partial h +
\partial h \ast K$, where ``$\ast$" denotes some metric
contraction of the tensor product.  In Euclidean coordinates at
the Minkowski data, $D\Phi(h,K)=\Big( \sum\limits_j
(-2(K_{ij,j}-K_{jj,i})), \sum\limits_{i,j} (h_{ij, ij}-h_{ii,
jj})\Big).$  Thus for any vector and scalar pair $(Y,N)$ which
satisfies $D\Phi^*(Y,N)=(0,0)$, we have as a consequence of
integration by parts that
\begin{equation} \label{bdyint}
\int\limits_{\{ R_0\leq r\leq R\} } (Y,N)\cdot \Phi(g_{\text{Eucl}}+h, K) d\mu_e = \mathcal B(R)-\mathcal B(R_0) +  \int_{\{R_0\leq r \leq R\} } (Y, N)\cdot Q(h,K) d\mu_e
\end{equation}
where $d\mu_e$ is Euclidean volume measure, and where
$$\mathcal{B}(R)= \int_{r=R}
\Big((-2)Y^i(K_{ij}-K^{\ell}{}_{\ell}\delta_{ij})    +
N(\sum\limits_i (h_{ij,i}-h_{ii,j}) - \sum\limits_i
(N_{,i}h_{ij} - N_{,j} h_{ii}  \Big)\nu^j\; d\sigma_e .$$ By
letting $Y$ be a Euclidean Killing vector field, or letting $N$
be a constant or a coordinate function $x^\ell$, we can easily
relate $\mathcal B(R)$ to one of the above surface integrals
defining the ADM energy-momenta.

\subsection{Hamiltonian Formulation of the Poincar\'{e} Charges} \label{ham}
It will be convenient to use the Hamiltonian formulation of the
Poincar\'{e} charges, as we review now (see Appendix E of
\cite{ChDelay} and references therein).  Let $\hyp$ be a
three-dimensional spacelike hypersurface in a four-dimensional
Lorentzian space-time $(\mcM,\bar g)$. Suppose that $\mcM$
contains an open set $\mcU$ with a time coordinate $t$ (with
range not necessarily equal to $\R$), as well as a  ``radial''
coordinate $r\in[R,\infty)$, leading to local coordinate
systems $(t,r,v^A)$, with $(v^A)$ providing local coordinates
on a two-dimensional sphere. We further require that $\hyp\cap
\mcU=\{t=0\}$. Assume that the metric $\bar g_{\mu\nu}$
approaches the Minkowski metric $\eta_{\mu\nu}$ as $r$ tends to
infinity. Set $\Omega=dx^0\wedge dx^1\wedge dx^2 \wedge dx^3$,
and $dS_{\alpha\beta}= \Omega(\frac{\partial}{\partial
x^{\alpha}}, \frac{\partial}{\partial x^{\beta}}, \cdot,
\cdot)$. The Hamiltonian analysis of vacuum general relativity
in~\cite{ChAIHP} leads to the formula
$$ H(\hyp,\bar g,X)= \frac 12 \int_{\partial_{\infty}\hyp}
 \ourU^{\alpha\beta}dS_{\alpha\beta}
$$
for the Hamiltonian $ H(\hyp,\bar g,X)$ associated to the flow
of a vector field $X$, assumed to be a Killing vector field for
the Minkowski metric, where
\begin{eqnarray*}
  8\pi \ourU^{\nu\lambda}&= &
 \displaystyle{\frac{1}{ \sqrt{|\det \bar g|}}}
\bar g_{\beta\gamma}\mathring \nabla_\kappa
(|\det \bar g |\; \bar g^{\gamma[\nu}\bar g^{\lambda]\kappa})X^\beta +  \sqrt{|\det \bar g|}~\bar g^{\alpha[\nu} {\mathring \nabla X^{\lambda]}}_{\alpha}
\;,
\end{eqnarray*}
with $\mathring \nabla$   the Levi-Civita connection for the
Minkowski metric, and  $\det \bar g = \det (\bar g_{\rho
\sigma})$. The integration over $\partial_{\infty}\hyp$ is
taken, as usual, as a limit of integrals over spheres tending
to infinity. We let Greek indices run from 0 to 3, with $x^0=t$, and
$A^{[\mu\nu]}= \frac 1 2 (A^{\mu\nu}-A^{\nu\mu})$.  We note that under enough approximate parity (such as holds for the Kerr data we consider), the above integrals converge \cite{ChDelay}.

If $\hyp$, viewed as a hypersurface in  $(\mcM, \bar g)$, has
first and second fundamental forms $(g,K)$ satisfying
(\ref{af}) and (\ref{rt}), then for the appropriate choice of
$X$, the Hamiltonian yields the energy and momenta defined
earlier.  Indeed, if we let $X=\frac{\partial}{\partial
x^{\mu}}$, we get the energy-momentum one-form:  $H(\hyp,\bar
g, \frac{\partial}{\partial x^{\mu}})= p_{\mu}$, where $p_0=m$.
With $X=x^0 \frac{\partial}{\partial x^{i}}+ x^i
\frac{\partial}{\partial x^{0}}$, we get $H(\hyp, \bar g, X)=
mc^i$, and with $X=\epsilon_{ij\ell}x^j\frac{\partial}{\partial
x^{\ell}}$, we have $H(\hyp,\bar g, X)= J^i$.

We note that these Killing vectors correspond to \emph{Killing
Initial Data}, or \emph{KIDs}, for the Minkowski metric
\cite{KIDs}.  In Gaussian coordinates about $\hyp$, the
future-pointing unit normal is $\frac{\partial}{\partial t}$,
and for $Y$ tangent to $\hyp$, $N\frac{\partial}{\partial t}+Y$
is Killing for the Minkowski metric if and only if
$(D\Phi)^*(Y,N)=(0,0)$, where $D \Phi$ is the linearization of
the constraints operator at the Minkowski data.  We let
$\ourU^{\alpha\beta}(Y,N)$ correspond to
$X=N\frac{\partial}{\partial t}+Y$.

In fact the surface integrals for the Hamiltonian at finite
radii can be related to the constraints operator $\Phi(g,K)= (
\tau_i, \rho)$ by the following identity, where
$X=N\frac{\partial } {\partial t}+Y^i\frac{\partial}{\partial
x^i}$ corresponds to the KID $(Y,N)$, and $q$ is a quadratic
form in $(g_{ij}-\delta_{ij}, \frac{\partial g_{ij}}{\partial
x^k}, K_{ij})$ with coefficients uniformly bounded in terms of
bounds on $g_{ij}$ and $g^{ij}$:
\begin{eqnarray}
   \int_{\{x^0=0,r=R\}}
 \ourU^{\alpha\beta}(Y,N)dS_{\alpha\beta}&=& \int_{\{x^0=0,r=R_0\}}
 \ourU^{\alpha\beta}(Y,N)dS_{\alpha\beta} 
 \nonumber \\ && +\frac 1 {8\pi} \int_{\{x^0=0,R_0\le r\le R\}}
  \left(Y^i \tau_i + N \rho+q\right) d\mu_g\;.
  \nonumber \\ &&
\label{C4a}\end{eqnarray}

\section{Constructing $N$-Body Initial Data by Gluing}
\subsection{Many-Kerr Initial Data sets}
In this section we show how to attach a collection of $N$ Kerr
ends into one AE end while still solving the vacuum
constraints.  A special case of Theorem~\ref{main3} below, for
very special configurations (e.g., identical bodies placed
symmetrically about a center)  has earlier been established
in~\cite[Section~8.9]{ChDelay}.

For $0\leq a <b$, let $\Gamma(\mathbf  y, a,b)=\{ \mathbf  x \in
\mathbb{R}^3: a<|\mathbf  x - \mathbf  y|<b\}$ be a coordinate annulus
of inner radius $a$ and outer radius $b$ centred at $\mathbf  y$.

\begin{theorem}[Many-Kerr initial data sets]
 \label{main3}
For each $k=1, \ldots, N$, let $(E_k, g^k,K^k)$ be a Kerr
asymptotically Euclidean three dimensional end,  with
Poincar\'e charges
$$
 Q_k:=(m_k, \mathbf  p_k, m_k \mathbf  c_k, \mathbf 
J_k)
\;.
$$
Here the first entry is the ADM mass, the second is the ADM
momentum, $\mathbf  c_k$ is the centre of mass, and $\mathbf  J_k$  is
the total angular momentum. Suppose that the closures of the
annuli $\Gamma(\mathbf  c_k, 1,3)$ are \emph{pairwise disjoint},
and that  $\sum\limits_{k=1}^N (m_k, \mathbf  p_k)$ is time-like.

There exists $\epsilon_0>0$ such that for each
$0<\epsilon<\epsilon_0$ there is a vacuum asymptotically flat
initial data set $(E,g_{\epsilon},K_\epsilon)$ containing
isometrically
$$\bigcup\limits_{k=1}^N\left(\Gamma(\epsilon^{-1} \mathbf  c_k,
\epsilon^{-1},2\epsilon^{-1}), g^k,K^k\right)
 \;.
$$
\end{theorem}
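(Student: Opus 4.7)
The plan is to build an approximate vacuum solution by superposing the $N$ rescaled Kerr ends onto a single global Kerr ansatz at infinity, then apply a Corvino--Schoen type deformation with parameter adjustment to produce an exact solution. After translating each Kerr data set so that its centre of mass sits at $\epsilon^{-1}\mathbf{c}_k$, the Kerr--Schild form implies that on the exterior $|\mathbf{x}-\epsilon^{-1}\mathbf{c}_k|\ge\epsilon^{-1}$ the deviation $g^k-\eta$ is of size $O(m_k\epsilon)$, with analogous estimates for $K^k$. Let $(g_\lambda,K_\lambda)$ denote a ten-parameter Kerr family at the origin, indexed by $\lambda=(m,\mathbf{p},m\mathbf{c},\mathbf{J})$ near $\sum_k Q_k$, and pick cut-off functions $\chi_k$ equal to $1$ on $B(\epsilon^{-1}\mathbf{c}_k,2\epsilon^{-1})$ and vanishing outside $B(\epsilon^{-1}\mathbf{c}_k,3\epsilon^{-1})$. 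Define
\[
  g^{\text{app}}_{\epsilon,\lambda}=\eta+\sum_{k}\chi_k(g^k-\eta)+\Big(1-\sum_k\chi_k\Big)(g_\lambda-\eta),
\]
and analogously for $K^{\text{app}}_{\epsilon,\lambda}$. This data equals $(g^k,K^k)$ on each ball containing the prescribed Kerr annulus, equals $(g_\lambda,K_\lambda)$ far from all bodies, and the constraint violation is supported in the disjoint transition annuli $\Gamma(\epsilon^{-1}\mathbf{c}_k,2\epsilon^{-1},3\epsilon^{-1})$, where the nonlinear cross terms from the superposition make it of size $O(\epsilon^2)$ pointwise.

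Following the Chru\'sciel--Delay gluing scheme recalled in Section~\ref{sGd}, look for a correction of the form $(h,k)=\rho\, D\Phi^*(Y,N)$ with $\rho$ a weight supported in the transition annuli. The operator $P=D\Phi\,(\rho\, D\Phi^*)$ is self-adjoint and elliptic; after rescaling each transition annulus by $\epsilon$ so that its geometry is bounded and its metric $O(\epsilon)$-close to Euclidean, a standard weighted a priori estimate yields a unique $(Y,N)$ solving the corresponding nonlinear corrected constraint equation modulo a finite-dimensional obstruction. Identifying this obstruction with the $L^2$-pairing of the constraint violation against the ten Poincar\'e KIDs on Minkowski and using the Hamiltonian identity \eq{C4a} to convert it into boundary integrals, the obstruction reduces at leading order in $\epsilon$ to the ten-vector difference between the Poincar\'e charges of $g_\lambda$ and the target $\sum_k Q_k$.

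To close the argument, kill this obstruction by adjusting $\lambda$: the Kerr parameter-to-charge map is a local diffeomorphism at $\lambda=\sum_k Q_k$ thanks to the timelike hypothesis on $\sum_k(m_k,\mathbf{p}_k)$, so the implicit function theorem produces, for each $\epsilon<\epsilon_0$, a unique $\lambda(\epsilon)$ close to $\sum_k Q_k$ for which the obstruction vanishes identically. The corresponding exact vacuum solution coincides with each $g^k$ on the prescribed Kerr annulus and is the Kerr metric $g_{\lambda(\epsilon)}$ near infinity, with charges close to $\sum_k Q_k$.

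The principal difficulty is securing the inverse of $P$ and the derivative of the obstruction map in $\lambda$ with constants uniform in $\epsilon$: as $\epsilon\to 0^+$ the transition annuli retreat to infinity and expand, so the weights $\rho$ and the function spaces must be tailored so that, after rescaling, the linear problem on each annulus becomes a perturbation of a fixed flat elliptic problem, and so that the Hamiltonian pairing \eq{C4a} collapses the $N$ local contributions into a single ten-dimensional obstruction governed by the global Poincar\'e charges, for which the Kerr parameters provide a transverse slice. The timelike assumption on $\sum(m_k,\mathbf{p}_k)$ is precisely what makes this slice regular and keeps the Jacobian of the Poincar\'e-charge map uniformly non-degenerate.
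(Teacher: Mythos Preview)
Your overall strategy matches the paper's: build approximate data from Kerr pieces, solve the constraints up to the ten-dimensional cokernel via the Chru\'sciel--Delay/Corvino--Schoen machinery, convert the obstruction to boundary charge integrals via \eq{C4a}, and then kill it by adjusting the parameters of the exterior Kerr. The paper carries this out on a fixed bounded domain after first rescaling all data by $\epsilon$ (so the interior charges become $(\epsilon m_k,\epsilon\mathbf p_k,\epsilon m_k\mathbf c_k,\epsilon^2\mathbf J_k)$ and the geometry is uniformly $O(\epsilon)$-close to Euclidean), and closes with the Brouwer fixed point theorem rather than the implicit function theorem; these are largely presentational choices equivalent to yours.

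There is, however, a genuine error in your identification of the target~$\lambda$. You place the $k$-th body at $\epsilon^{-1}\mathbf c_k$ and then assert that the obstruction is, to leading order, the difference between the charges of $g_\lambda$ and $\sum_k Q_k$. But the centre-of-mass and angular-momentum charges are not translation-invariant: a body with momentum $\mathbf p_k$ sitting at $\epsilon^{-1}\mathbf c_k$ contributes orbital angular momentum $\epsilon^{-1}\mathbf c_k\times\mathbf p_k$ about the origin, and first moment $\epsilon^{-1}m_k\mathbf c_k$. The actual balance equations (the paper's \eq{C4.1.3}--\eq{C4.1.4}) force the exterior Kerr to carry angular momentum $\epsilon^{-1}\sum_k\mathbf c_k\times\mathbf p_k+O(1)$ and first moment $\epsilon^{-1}\sum_k m_k\mathbf c_k+O(1)$, both of which diverge as $\epsilon\to0$. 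Hence $\lambda(\epsilon)$ does \emph{not} stay near the fixed point $\sum_k Q_k$, and your IFT step as written cannot close. The paper deals with this by first translating so that $\sum_k m_k\mathbf c_k=0$, and then parametrising the exterior angular momentum as $\epsilon^2\sum_k\mathbf J_k+\epsilon\,\delta\mathbf J$ with $\delta\mathbf J$ ranging over a fixed compact set centred on $\sum_k\mathbf c_k\times\mathbf p_k$; Brouwer then applies on that fixed set. A secondary point: the solution-up-to-cokernel furnished by the deformation theory is a priori only continuous in the parameter, which is why the paper invokes Brouwer; your IFT argument would require an additional smoothing step (cf.~\cite{ChDelayHilbert}).
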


\newcommand{\txe}{\tilde x_\epsilon}
\newcommand{\txek}{\tilde x_{\epsilon,k}}
\newcommand{\xek}{x_{\epsilon,k}}
\newcommand{\xe}{x_\epsilon}

\begin{proof}  Let $m_T=\sum\limits_{k=1}^N m_k$.  Since $m_T$ is non-zero, we can translate the origin of coordinates so that the center of mass $\sum\limits_{k=1}^N m_k \mathbf  c_k$ vanishes.  Let $B(r_0)\subset \mathbb{R}^3$ be a Euclidean ball of radius
$r_0=5+\max \{ |\mathbf  c_1|, \ldots, |\mathbf  c_N|\}$, centred at the origin.

We now construct a family of data on $\Omega_0:=B(r_0)\setminus \Big( \bigcup\limits_{k=1}^N \overline{\Gamma(\mathbf  c_k, 0, 1)} \Big)$.  Let $\chi$ be a smooth nondecreasing function so that $\chi(t)= 0$ for $t< 9/4$ and $\chi(t)=1$ for $t>11/4$.  Define $(\tilde g_{\epsilon},\tilde K_\epsilon)$ on $\Omega_0$ as follows:

\begin{enumerate}
\item[$\bullet$] On each $\Gamma(\mathbf  c_k,1,2)$, we
    let $(\tilde g_{\epsilon},\tilde K_\epsilon)$ be
    equal to $(g_{\epsilon,k}, K_{\epsilon,k})$,
    where $(g_{\epsilon,k}, K_{\epsilon,k})$ are initial data for a Kerr metric in
    Kerr-Schild coordinates with global charges
    $$
    Q_{\epsilon,k}=(\epsilon m_k, \epsilon \mathbf  p_k, \epsilon m_k \mathbf  c_k,\epsilon^2\mathbf  J_k)
    \;.
    $$

\item[$\bullet$] On each $\overline{\Gamma(\mathbf  c_k,2,3)}$,
    $$(\tilde g_{\epsilon}(\mathbf  x),\tilde K_\epsilon(\mathbf  x))=(1-\chi(|\mathbf  x-\mathbf  c_k|))
   (g_{\epsilon,k}, K_{\epsilon,k})  +\chi(|\mathbf  x-\mathbf  c_k|)
     (g_{\text{Eucl}},0)
     \;.
    $$
\newcommand{\xeext}{x_{\epsilon,\text{ext}}}

\item[$\bullet$]  On  $\Gamma(\mathbf  0,
    r_0-1, r_0)$, we let  $$(\tilde g_{\epsilon}(\mathbf  x),\tilde K_\epsilon(\mathbf  x))=(1-\chi(|\mathbf  x |-r_0+3))
     (g_{\text{Eucl}},0) +\chi(|\mathbf  x|-r_0+3)
    (g_{\epsilon, \rm{ext}}(\mathbf  x), K_{\epsilon, \rm{ext}}(\mathbf  x))
     \;.
    $$ where
    $ (g_{\epsilon, \rm{ext}}(\mathbf  x), K_{\epsilon, \rm{ext}}(\mathbf  x))$ is a Kerr metric with global charge
    $$
    Q_{\epsilon, \ext}=\Big(\underbrace{\epsilon (\sum_k m_k + \delta m)}_{=: m_{\epsilon, \ext}}, \epsilon(\sum_k \mathbf  p_k + \delta \mathbf  p),
   m_{\epsilon,\ext} \delta \mathbf  c, \epsilon^2
    \sum_k \mathbf  J_k + \epsilon \delta \mathbf  J\Big)
    \;,
    $$
    where
    $$
   \theta:= (\delta m, \delta \mathbf  p,  \delta \mathbf  c, \delta \mathbf  J)\in \Theta\subset \R^{10}
    \; .
    $$
    $\Theta$ is a compact convex set which will be specified below.
 \item[$\bullet$]  On $\overline{\Gamma(\mathbf  0, 0, r_0-1)}\setminus \bigcup\limits_{k=1}^N \overline{\Gamma(\mathbf  c_k, 1, 3)}$, let $(\tilde g_{\epsilon}, \tilde K_{\epsilon})=(g_{\rm{Eucl}}, 0)$.

\end{enumerate}

In what follows we will use the fact, which follows directly from the above definition and from the Kerr-Schild form of the Kerr metric, that for any $k \in
\N$,
we  have
    \bel{25V.2}
    \|(\tilde g_{\epsilon},\tilde K_\epsilon) -
    (g_{\text{Eucl}},0)\|_{C^k(\Omega_0)}\le C \epsilon
    \;,
    \ee
    for some constant $C=C(k)$.  We thus have $\Phi(\tilde g_{\epsilon},\tilde K_\epsilon)=O(\epsilon)$, and $\Phi(\tilde g_{\epsilon},\tilde K_\epsilon)=0$ in a neighborhood of $\partial \Omega$, where $\Omega=B(r_0)\setminus \bigcup\limits_{k=1}^N \overline{\Gamma(\mathbf  c_k, 0, 2)}$.

The goal is to modify $(\tilde g_{\epsilon},\tilde K_\epsilon)$ by adding a smooth deformation $(\delta g_{\epsilon},\delta  K_\epsilon)$ supported in $\overline \Omega$ so that $\Phi(\tilde g_{\epsilon}+\delta g_{\epsilon},\tilde K_\epsilon+\delta K_{\epsilon})=0$.  The proof proceeds in two stages, as we now describe.  Recall that the linearized constraints operator $D\Phi$ at the Minkowski data has a ten-dimensional cokernel $\mathcal K_0=\mbox{ker}(D\Phi^*)$ spanned by the KIDs mentioned earlier (Section \ref{ham}); this cokernel is an obstruction to solving the full system $\Phi(\tilde g_{\epsilon}+\delta g_{\epsilon},\tilde K_\epsilon+\delta K_{\epsilon})=0$ for $\delta g_{\epsilon}$ and $\delta K_{\epsilon}.$   However, $(\tilde g_{\epsilon},\tilde K_\epsilon)$ is close to the Minkowski data, and the constraints $\Phi(\tilde g_{\epsilon},\tilde K_\epsilon)$ vanish on a neighborhood of $\partial \Omega$, and so in particular $\Phi(\tilde g_{\epsilon},\tilde K_\epsilon)$ belongs to the appropriate weighted spaces used in \cite{ChDelay, CorvinoSchoen2}.   Therefore, by applying Theorem 2 of \cite{CorvinoSchoen2} or Corollary 5.11 of \cite{ChDelay}, we can at least solve the equation \emph{up to cokernel}: we can find, for each $\theta\in \Theta$ as above, a smooth deformation $(\delta g_{\epsilon}^{\theta}, \delta _{\epsilon}^{\theta})$ which is supported in $\overline \Omega$, satisfies $\| (\delta g_{\epsilon}^{\theta}, \delta K_{\epsilon}^{\theta})\|_{C^3(\overline \Omega)} \leq C\epsilon$ ($C$ is independent of $\theta$ and $\epsilon$), and also satisfies $\Phi(\tilde g_{\epsilon}+\delta g_{\epsilon}^{\theta},\tilde K_\epsilon+\delta K^{\theta}_{\epsilon})\in \zeta \mathcal K_0$, where $\zeta$ is a smooth weight function that vanishes on $\partial \Omega$.  For instance, in the notation of \cite{ChDelay}, $\zeta=\psi^2$, for $\psi$ a smooth function which near $\partial \Omega$ takes the form $\psi=e^{-s/d}$, where $d$ is a defining function for the boundary $\partial \Omega$, and where $s>0$.
The collection of $L^2(d\mu_e)$-projections of $\Phi(\tilde g_{\epsilon}+\delta g_{\epsilon}^{\theta},\tilde K_\epsilon+\delta K^{\theta}_{\epsilon})$ onto a basis for $\mathcal K_0$ defines  a map $ \Theta\ni\theta \mapsto \mathbb{R}^{10}$ which is continuous in $\theta$.  We show that this map vanishes for some $\theta$, and that will complete the proof; a rescaling then yields Theorem \ref{main3}.

We now focus on  computing these projections, which can be thought of as \emph{balance equations} as we shall see.  The primary tools for these computations are integration by parts together with the flux integrals (\ref{C4a}) and (\ref{adm}).
Note that the difference $|d\mu_e-d\mu_{\tilde g_{\epsilon}}|$ is $O(\epsilon)$, and that the Taylor expansion around the Minkowski data $(g_{\text{Eucl}}, 0)$ yields $$\Phi(\tilde g_{\epsilon}+\delta g_{\epsilon}^{\theta},\tilde K_\epsilon+\delta K^{\theta}_{\epsilon})=D\Phi(\tilde g_{\epsilon}+\delta g_{\epsilon}^{\theta}-g_{\rm{Eucl}},\tilde K_\epsilon+\delta K^{\theta}_{\epsilon})+ O(\epsilon^2).$$  As in Section \ref{ham}, integrating this quantity against the basis of KIDs we get the boundary integrands of (\ref{adm}), integrated over $\partial \Omega$, plus $O(\epsilon^2)$ terms.   Alternatively we can use an analogue of the integration-by-parts formula (\ref{C4a}) to derive the balance equations in a form analogous to (8.7) of~\cite{ChDelay}.   In either case, recall that on $\partial \Omega$ our solution up to cokernel agrees with Kerr data.  For such data, the differences between the limiting surface integrals (which give the charges) and  surface integrals at finite radius are given by the divergence theorem as in (\ref{C4a}). On the other hand, the Kerr data solves the vacuum constraints, so that the integrand over the annulus in (\ref{C4a}) reduces to $q$.  In the case of the data $(g_{\epsilon,\ext}, K_{\epsilon, \ext})$ under consideration, this results in a $O(\epsilon^2)$ term; note that we use Kerr-Schild coordinates, in which the metric components take the form $(\eta_{\mu\nu}+O(m_{\epsilon, \ext}))$.

Keeping all this in mind, we can now compute the balance equations.  Working first with  $e_{(1)}=(0,1)$, the KID
corresponding to the Minkowskian Killing vector $\frac{\partial}{\partial t}$ (cf. Section \ref{ham}), we obtain

\begin{eqnarray}
\frac{1}{16\pi}\langle e_{(1)}, \Phi(\tilde g_{\epsilon}+\delta g_{\epsilon}^{\theta},\tilde K_\epsilon+\delta K^{\theta}_{\epsilon})\rangle_{L^2(\Omega)} & = &
 \int_{\partial \Omega }{\mathbb U}^{\alpha \beta} (e_{(1)})dS_{\alpha\beta}+O(\epsilon^2)
 \nonumber
 \\
 & = &
 \epsilon\delta m +O(\epsilon^2)
 \;.
 \label{C4.1.1}
\end{eqnarray}
Next choosing $e_{(1+i)}=(\partial_i,0)$, for each $i=1,2,3$, to be the KID
corresponding to the Minkowskian Killing vector $\frac{\partial}{\partial x^i}$,
we calculate the balance equation to be
\begin{eqnarray}
\frac{1}{16\pi}\langle e_{(1+i)}, \Phi(\tilde g_{\epsilon}+\delta g_{\epsilon}^{\theta},\tilde K_\epsilon+\delta K^{\theta}_{\epsilon})\rangle_{L^2(\Omega)} & = &
 \epsilon\delta p^i +O(\epsilon^2)
 \;.
 \label{C4.1.2}
\end{eqnarray}
Now let $e_{(4+i)}$, for each $i=1,2,3$, be the KID corresponding to the
Minkowskian Killing vector $t\frac{\partial}{\partial x^i}+x^i\frac{\partial}{\partial t}$; at
$t=0$ this corresponds to $(Y,N)=(0,x^i)$. The boundary
integral around each $\mathbf  c_k$ is thus
\beaa
 \int_{\partial B({\mathbf  c_k},1)}{\mathbb U}^{\alpha \beta}(0,x^i) dS_{\alpha\beta}
 & = &
 \int_{\partial B({\mathbf  c_k},1)}{\mathbb U}^{\alpha \beta}(0,x^i-c^i_k+c^i_k) dS_{\alpha\beta}
\\
 & = &
 \ub{\int_{\partial B({\mathbf  c_k},1)}{\mathbb U}^{\alpha \beta}(0,x^i-c^i_k)
 dS_{\alpha\beta}}_{0+O(\epsilon^2)}
 +
 {\int_{\partial B({\mathbf  c_k},1)}{\mathbb U}^{\alpha \beta}(0,c^i_k)
 dS_{\alpha\beta}}
\\
 & = &
 c_k^i\ub{\int_{\partial B({\mathbf  c_k},1)}{\mathbb U}^{\alpha \beta}(0,1)
 dS_{\alpha\beta}}_{\epsilon m_k +O(\epsilon^2)}  +O(\epsilon^2)
\\
 & = &
  \epsilon m_k c^i_k + O(\epsilon^2)
 \;,
\eeaa
where the first integral in the second line vanishes, up to
quadratic terms in the metric, by definition of the centre of mass;
we have also used linearity of the global charges with respect
to the KIDs, and the fact that the ADM mass is the Poincar\'e
charge associated to the KID $(0,1)$. It follows that the
associated balance equation reads
\begin{equation}
\frac{1}{16\pi}\langle e_{(4+i)}, \Phi(\tilde g_{\epsilon}+\delta g_{\epsilon}^{\theta},\tilde K_\epsilon+\delta K^{\theta}_{\epsilon})\rangle_{L^2(\Omega)}  =
m_{\epsilon, \ext}\delta c^i -  \epsilon\underbrace{\sum_k m_k c^i_k}_0  +O(\epsilon^2)
 \;.
 \label{C4.1.3}
\end{equation}
Finally, let $\partial_l=\frac{\partial}{\partial x^\ell}$, and let $e_{(7+i)}$, $i=1,2,3$, be the KIDs corresponding
to the Minkowskian Killing vector $\epsilon_{ij\ell}x^j
\partial_\ell$; thus $(Y,N)=(\epsilon_{ij\ell}x^j
\partial_\ell,0)$. We then have
\beaa
 \int_{\partial B({\mathbf  c_k},1)}{\mathbb U}^{\alpha \beta}
 (\epsilon_{ij\ell}x^j\partial_\ell,0) dS_{\alpha\beta}
 & = &
 \int_{\partial B({\mathbf  c_k},1)}{\mathbb U}^{\alpha
 \beta}(\epsilon_{ij\ell}(x^j - c^j_k+c^j_k)
\partial_\ell,0) dS_{\alpha\beta}
\\
 & = &
 \ub{
 \int_{\partial B({\mathbf  c_k},1)}{\mathbb U}^{\alpha
 \beta}(\epsilon_{ij\ell}(x^j - c^j_k)
\partial_\ell,0)
 dS_{\alpha\beta}}_{\epsilon^2 J_k^i+O(\epsilon^2)}
\\
&& +
 {\int_{\partial B({\mathbf  c_k},1)}{\mathbb U}^{\alpha \beta}(\epsilon_{ij\ell}c^j_k
\partial_\ell,0)
 dS_{\alpha\beta}}
\\
 & = &
 \epsilon_{ij\ell}c^j_k\ub{\int_{\partial B({\mathbf  c_k},1)}{\mathbb U}^{\alpha \beta}(\partial_\ell,0)
 dS_{\alpha\beta}}_{\epsilon p^\ell_k +O(\epsilon^2)}  +O(\epsilon^2)
\\
 & = &
 \epsilon  \epsilon_{ij\ell} c^j_k p^\ell_k + O(\epsilon^2)
 \;,
\eeaa
We conclude that (recall that $Q_{\epsilon, \mbox{ext}}$ has angular momentum $\epsilon^2
    \sum_k \mathbf  J_k + \epsilon \delta \mathbf  J$)
\begin{equation}
\frac{1}{16\pi}\langle e_{(7+i)}, \Phi(\tilde g_{\epsilon}+\delta g_{\epsilon}^{\theta},\tilde K_\epsilon+\delta K^{\theta}_{\epsilon})\rangle_{L^2(\Omega)}   =
 \epsilon \delta J^i - \epsilon  \sum_k(\mathbf  c_k \times \mathbf  p_k)^i + O(\epsilon^2).
 \label{C4.1.4}
\end{equation}
We let $\Theta=(0,0,0,\sum\limits_k \mathbf  c_k \times \mathbf  p_k)+B_0$, where $B_0$ is a closed ball around the origin chosen so that $Q_{\epsilon, \ext}$ has time-like four-momentum.  We can invoke now the Brouwer fixed point theorem, in a way
similar to the proof of Theorem~8.1%
\footnote{Compare~\cite{ChDelayHilbert} for smoothing
arguments.} of~\cite{ChDelay}, to conclude that there exist
$\epsilon_0$ small enough so that the
right-hand-sides of the balance equations
\eq{C4.1.1}-\eq{C4.1.4} can be all made to vanish for all
$0<\epsilon<\epsilon_0$, which is the desired result.

\end{proof}

\subsection{Reduction to Kerr Asymptotics}
We recall the well-known result of
~\cite{ChDelay,CorvinoSchoen2} which states that any AE vacuum
end satisfying the Regge-Teitelboim condition (\ref{rt}) (with
time-like ADM four-momentum) can be deformed, outside of a
compact set, to a new vacuum initial data set such that  the
data agrees with that of a suitably chosen space-like slice of
a Kerr space-time, outside of a compact set.  We emphasize that
this deformation can be performed to preserve any given
pre-compact subset of the end; therefore  we can apply this
deformation to the data sets for each of our $N$ bodies,
preserving as large a compact subset of the original data as we
like.  We note for our main theorem that the four-momentum of
the resulting Kerr can be made as close to the original
four-momentum of the end as we like by performing the
deformation at larger coordinate radii.

Here we give a significant generalization of the above result,
which removes the Regge-Teitelboim assumption (\ref{rt}) in the
above gluing:

\begin{proposition} \label{kerr}
Let $(g,K)$ be AE vacuum initial data on the exterior $E$ of a
ball in $\mathbb R^3$ which has time-like ADM four-momentum
$(m, \mathbf  p)$.  Let $\epsilon>0$.  For sufficiently large $R$,
there is a vacuum initial data set $(\bar g, \bar K)$ on $E$ so
that on $E\cap \{ |\mathbf  x|\leq R\}$ we have  $(\bar g, \bar K)=(g,K)$, and so that on $\{
|\mathbf  x| \geq 2R\}$, $(\bar g, \bar K)$ is identical to data
from a space-like slice of a suitably chosen Kerr space-time.
If $( m+\delta m, \mathbf { p}+\delta \mathbf  p)$ is the four-momentum
of $(\bar g, \bar K)$, then $| \delta m|  < \epsilon$ and
$\left | \delta \mathbf  p\right| <\epsilon.$  If, moreover,
(\ref{rt}) holds, then also $|\delta \mathbf  c|<\epsilon$ and
$|\delta \mathbf  J|<\epsilon$.
\end{proposition}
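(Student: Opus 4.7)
The plan is to mirror the strategy of Theorem~\ref{main3}, now carried out in an annular gluing region at large radius in the given end. First I would fix $R$ large enough that, by the asymptotic condition~\eqref{af}, the data $(g,K)$ is $CR^{-1}$-close to the Minkowski data in every $C^k$-norm on $\{R \le |\mathbf x| \le 3R\}$, and also so that the finite-$R$ ADM-type surface integrals for the mass and linear momentum of $(g,K)$ already approximate $m$ and $\mathbf p$ to within $\epsilon/2$. Next, I would parameterize a family of Kerr-Schild initial data $(g^{\theta}_{\mathrm{K}}, K^{\theta}_{\mathrm{K}})$ by $\theta = (\delta m, \delta \mathbf p, \mathbf c^{\mathrm{K}}, \mathbf J^{\mathrm{K}}) \in \Theta \subset \mathbb R^{10}$, with Poincar\'e charges $(m+\delta m, \mathbf p+\delta \mathbf p, (m+\delta m)\mathbf c^{\mathrm{K}}, \mathbf J^{\mathrm{K}})$. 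The compact convex set $\Theta$ is taken with its first four components a small ball about the origin, and its last six a ball large enough to contain the boost/rotation inner-boundary integrals described below.

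Using a smooth radial cutoff $\chi$ supported on a transition sub-annulus of $\{R \le |\mathbf x| \le 2R\}$, I would form the blended data
$$
(\tilde g_\theta, \tilde K_\theta) = (1-\chi)(g,K) + \chi (g^{\theta}_{\mathrm{K}}, K^{\theta}_{\mathrm{K}}),
$$
extended to equal $(g,K)$ on $\{|\mathbf x| \le R\}$ and $(g^{\theta}_{\mathrm{K}}, K^{\theta}_{\mathrm{K}})$ on $\{|\mathbf x| \ge 2R\}$. The explicit Kerr-Schild form recalled in Section~2 shows that, for $R$ large enough relative to the diameter of $\Theta$, $(\tilde g_\theta, \tilde K_\theta)$ is $C^k$-close to Minkowski on the transition region uniformly in $\theta$, and $\Phi(\tilde g_\theta, \tilde K_\theta)$ is supported there. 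The gluing-up-to-cokernel technology of~\cite{CorvinoSchoen2, ChDelay} then yields a smooth correction $(\delta g^\theta, \delta K^\theta)$ supported in the transition annulus, with $C^3$-norm small uniformly in $\theta$, so that $\Phi(\tilde g_\theta + \delta g^\theta, \tilde K_\theta + \delta K^\theta) \in \zeta \mathcal K_0$, where $\mathcal K_0$ is the ten-dimensional cokernel spanned by the Minkowski KIDs.

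The remaining task is to kill the ten projections onto $\mathcal K_0$. By the identity~\eqref{C4a} together with the vacuum property of Kerr, each projection equals the difference between an outer Poincar\'e charge of the Kerr template (a linear functional of $\theta$) and the corresponding finite-$R$ inner surface integral of $(g,K)$, plus a quadratic error. For the translation KIDs $(0,1)$ and $(\partial_i,0)$, the inner integrals approximate $m$ and $\mathbf p$ to within $\epsilon/2$, so the balance equations force $|\delta m|, |\delta \mathbf p| < \epsilon$. For the boost KIDs $(0,x^i)$ and rotation KIDs $(\epsilon_{ij\ell}x^j\partial_\ell,0)$, the inner integrals are merely some finite numbers at our chosen $R$; enlarging $\Theta$ to include them permits a Brouwer fixed-point argument, exactly as in the proof of Theorem~\ref{main3}, to produce a zero $\theta_*$ of the balance map, and the desired $(\bar g, \bar K) = (\tilde g_{\theta_*} + \delta g^{\theta_*}, \tilde K_{\theta_*} + \delta K^{\theta_*})$. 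Under the additional hypothesis~\eqref{rt}, the boost/rotation inner integrals converge as $R \to \infty$ to the ADM values $m\mathbf c$ and $\mathbf J$ of $(g,K)$, so the Brouwer zero is selected with $|\delta \mathbf c|, |\delta \mathbf J| < \epsilon$ as well.

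The principal technical obstacle is the uniform-in-$\theta$ control of both the Kerr template and the gluing correction when $\Theta$ must be enlarged, in the absence of~\eqref{rt}, to accommodate boost/rotation targets that may grow with $R$. The key observation is that the center of the Kerr template then lies far inside $\{|\mathbf x|\le R\}$ or far outside $\{|\mathbf x|\ge 2R\}$, provided $R$ is taken sufficiently large relative to this growth; the Kerr-Schild expansion $g = \eta + O\!\big(m/|\mathbf x - \mathbf c^{\mathrm{K}}|\big)$ then remains uniformly small on the transition annulus, which is what permits the weighted gluing estimates of~\cite{CorvinoSchoen2, ChDelay} to apply with a $\theta$-independent constant, thereby closing the Brouwer argument.
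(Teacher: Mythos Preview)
Your overall strategy---blend to Kerr across an annulus, solve up to cokernel, then kill the ten projections via Brouwer---matches the paper. But there is a genuine gap in the final paragraph, precisely where you flag the ``principal technical obstacle.''

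You say that, absent~\eqref{rt}, the inner boost/rotation surface integrals at radius $R$ are ``merely some finite numbers,'' and that you can enlarge $\Theta$ to contain them ``provided $R$ is taken sufficiently large relative to this growth.'' The trouble is that this is circular: the inner integrals themselves depend on $R$, and from the decay~\eqref{af} alone the centre-of-mass and angular-momentum flux integrals over $\{r=R\}$ are a priori only $O(R)$ (the integrand is $O(r^{-1})$ on a sphere of area $O(r^2)$). If $\Theta$ must then have radius comparable to $R$, the Kerr parameters $\mathbf c^{\mathrm K}$ and $a=|\mathbf J^{\mathrm K}|/m$ can be of order $R$, so the Kerr centre need \emph{not} lie far inside $\{|\mathbf x|\le R\}$, and the Kerr-Schild expansion $\eta + O(m/|\mathbf x-\mathbf c^{\mathrm K}|)$ need not be uniformly small on the transition annulus. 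Your ``key observation'' therefore does not close the argument as stated.

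What is missing is exactly the estimate the paper supplies: because $(g,K)$ is \emph{vacuum}, identity~\eqref{bdyint} with $(Y,N)=(Y_i,0)$ or $(0,x^\ell)$ reduces to $\mathcal B(R)=\mathcal B(R_0)+\int (Y,N)\cdot Q(h,K)\,d\mu_e$, and since $Q(h,K)=O(|\mathbf x|^{-4})$ while $(Y,N)=O(|\mathbf x|)$, the bulk term is $O(\log R)$. Thus the inner boost/rotation integrals grow only like $O(\log R)$, not $O(R)$. The paper exploits this by rescaling the annulus $A_R$ to the unit annulus $A_1$ via $\phi_R(\mathbf x)=R\mathbf x$, $(g^R,K^R)=(R^{-2}\phi_R^*g,\,R^{-1}\phi_R^*K)$; under this rescaling the inner integrals become $O((\log R)/R)=o(1)$ (equations~\eqref{afjest}--\eqref{afjest0}), so a \emph{fixed} compact $\Theta=B_0$ suffices, the Kerr templates are uniformly close to Minkowski on $A_1$, and Brouwer applies with $\theta$-independent constants. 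Once you have the $O(\log R)$ bound, your unrescaled argument can also be made to work, but the estimate is the essential missing ingredient.
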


\begin{proof} The proof is a minor modification of  that used in
\cite{ChDelay, CorvinoSchoen2} to prove the result for  the
case in which the condition  (\ref{rt}) holds.  In fact the
primary modifications which are needed have been introduced in
the proof of Theorem \ref{main3}.

Given a vacuum AE end, we let $\phi_R:A_1\rightarrow A_R$ be
the scaling $\phi_R(\mathbf  x)=R\mathbf  x$, and let $(g^R, K^R)=
(R^{-2} \phi_R^*g, R^{-1} \phi_R^*K)$.  We note that since
(\ref{rt}) may not hold, the decay might not be good enough to
make the centre of mass and angular momentum well-defined.
However, we have
\begin{eqnarray}
R \int\limits_{\{r=1\}}\sum\limits_{j,k}
(K^R_{jk}- (K^R)^{\ell}{}_\ell g^R_{jk})Y^j_i\nu^k d\sigma_e&=& R^{-1}\int\limits_{\{r=R\}}\sum\limits_{j,k}
(K_{jk}-K^{\ell}{}_{\ell}g_{jk})Y^j_i\nu^k d\sigma_e \nonumber \\
&=& O\Big(\frac{\log R}{R}\Big). \label{afjest}
\end{eqnarray}
Similarly,
\begin{equation}
R \int\limits_{\{r=1\}}
\Big[ \sum\limits_{i,j} x^\ell \left( g_{ij,i}-g_{ii,j}\right) \nu^j -
 \sum\limits_{i}\big(g_{ik}\delta^{k\ell} \nu^i- g_{ii}\nu^\ell \big) \Big] d\sigma_e =O\Big(\frac{\log R}{R}\Big). \label{afjest0}
\end{equation}
These two estimates follow from (\ref{bdyint}), together with the fact that $(g,K)$ solve the vacuum constraints.  Indeed the estimate $Q(h,K)=O(|\mathbf  x|^{-4})$ implies that\\ $(Y,N)\cdot Q(h,K)=O(|\mathbf  x|^{-3})$ for $(Y,N)=(Y_i, x^\ell)=O(|\mathbf  x|)$.  Now applying (\ref{bdyint}) and the constraints, we obtain $\mathcal{B}(R)=\mathcal{B}(R_0)+O(\log R)$.

We let $\epsilon= R^{-1}$, so that we can use some of the notation from the proof of Theorem \ref{main3}.  As above, we use a cutoff function to glue $(g^R,K^R)$ to Kerr data with charges $Q_{\epsilon, \ext}=\Big(\epsilon (m + \delta m), \epsilon(\mathbf  p + \delta \mathbf  p),
   \epsilon (m + \delta m) \delta \mathbf  c, \epsilon \delta \mathbf  J\Big)$ on the annulus $A_1$, so that in a neighborhood of the inner boundary of $A_1$ the data is identically $(g^R,K^R)$, and in a neighborhood of the outer boundary the data is identical to the Kerr data.  Again, we parametrize a family of such data with $\theta:= (\delta m, \delta \mathbf  p,  \delta \mathbf  c, \delta \mathbf  J)\in \Theta= B_0$, where $B_0$ is a closed ball around the origin, chosen so that the four-momentum of $Q_{\epsilon, \ext}$ is time-like.  Let $(\tilde g_{\epsilon}, \tilde K_{\epsilon})$ be the resulting glued data.  We note that $\|\tilde g_{\epsilon}-g_{\text{Eucl}}\|_{C^{\ell+1}(A_1)}+\|\tilde K_{\epsilon}\|_{C^{\ell}(A_1)}=O(\epsilon)$, that $\Phi(\tilde g_{\epsilon}, \tilde K_{\epsilon})=O(\epsilon)$, and that $\Phi(\tilde g_{\epsilon},\tilde K_{\epsilon})$ vanishes in a neighborhood of $\partial A_1$.   As above, we solve the vacuum constraints up to cokernel, so that $\Phi(\tilde g_{\epsilon}+\delta g_{\epsilon}^{\theta},\tilde K_\epsilon+\delta K^{\theta}_{\epsilon})\in \zeta \mathcal
     K_0$.

We again analyze the projection of $\Phi(\tilde
g_{\epsilon}+\delta g_{\epsilon}^{\theta},\tilde
K_\epsilon+\delta K^{\theta}_{\epsilon})$ onto $\mathcal K_0$, using the
notation from the proof in the last section, along with (\ref{afjest}) and (\ref{afjest0}):
\begin{eqnarray*}
\frac{1}{16\pi}\langle e_{(1)}, \Phi(\tilde g_{\epsilon}+\delta g_{\epsilon}^{\theta},\tilde K_\epsilon+\delta K^{\theta}_{\epsilon})\rangle_{L^2(A_1)} &=& \epsilon\delta m +O(\epsilon^2)\\
\frac{1}{16\pi}\langle e_{(1+i)}, \Phi(\tilde g_{\epsilon}+\delta g_{\epsilon}^{\theta},\tilde K_\epsilon+\delta K^{\theta}_{\epsilon})\rangle_{L^2(A_1)} & = &
 \epsilon\delta p^i +O(\epsilon^2)\\
\frac{1}{16\pi}\langle e_{(4+i)}, \Phi(\tilde g_{\epsilon}+\delta g_{\epsilon}^{\theta},\tilde K_\epsilon+\delta K^{\theta}_{\epsilon})\rangle_{L^2(A_1)}  &=& \epsilon\big[ \big(
m+\delta m)\delta c^i  +o(1)\big] \\
\frac{1}{16\pi}\langle e_{(7+i)}, \Phi(\tilde g_{\epsilon}+\delta g_{\epsilon}^{\theta},\tilde K_\epsilon+\delta K^{\theta}_{\epsilon})\rangle_{L^2(A_1)}   &=&
 \epsilon\big( \delta J^i + o(1)\big).
\end{eqnarray*}
The proof now follows from the Brouwer fixed point theorem as before, followed by scaling back to the annulus $A_R$.  \end{proof}

\subsection{Proof of Theorem \ref{main1}}
\begin{proof}We first apply Proposition \ref{kerr} to deform the data on each end $E_k$ to a new vacuum initial data set which agrees with the data on each $U_k$, and outside a compact set agrees with data from a suitably chosen space-like slice of a Kerr space-time.  A rescaling of all the metrics then
reduces the problem to one in which all of the initial data sets
are Kerrian outside of a Kerr-Schild coordinate-ball of radius
one. The result follows now by applying Theorem~\ref{main3}.
 \end{proof}

\section*{Acknowledgments}  
 The authors are grateful to Institut Mittag-Leffler (Djursholm, Sweden), for hospitality and financial support during the initiation of this paper.  PTC was supported in part
by the Polish Ministry of Science and Higher Education grant Nr
N N201 372736.  JC was partially supported by NSF grant DMS-0707317 and the Fulbright Foundation.  JI was partially supported by NSF grant PHY-0652903.

\end{document}